\newtheorem{thm}{Theorem}
\newtheorem{lem}{Lemma}
\newcommand{\nop}[1]{}
\DeclareMathOperator{\Out}{Out}
\DeclareMathOperator{\In}{In}
\begin{document}
\title[On the Radius of Nonsplit Graphs and Information Dissemination]{On the Radius of Nonsplit Graphs and\\Information Dissemination in Dynamic Networks}

\author{Matthias F\"ugger}
\address{CNRS, LSV, ENS Paris-Saclay, Université Paris-Saclay, Inria}
\email{mfuegger@lsv.fr}
\author{Thomas Nowak}
\address{Université Paris-Sud}
\email{thomas.nowak@lri.fr}
\author{Kyrill Winkler}
\address{ECS, TU Wien}
\email{kwinkler@ecs.tuwien.ac.at}
\thanks{
This work has been supported by the Austrian Science Fund (FWF) projects
191020-2000 (RiSe), 191020-2002 (ADynNet) and SIC (P26436),
the CNRS project PEPS DEMO, and the Institut Farman project Dicimus.}

\begin{abstract}
A nonsplit graph is a directed graph where each pair of nodes has a common incoming neighbor.
We show that the radius of such graphs is in $O(\log \log n)$, where $n$ is the number of
  nodes.
We then generalize the result to products of nonsplit graphs.

The analysis of nonsplit graph products has direct implications in the context of distributed
  systems, where processes operate in rounds and communicate via message passing in each round:
  communication graphs in several distributed systems naturally relate to nonsplit graphs
  and the graph product concisely represents relaying messages in such networks.
Applying our results, we obtain improved bounds on the dynamic radius of such networks, i.e.,
  the maximum number of rounds until all processes have received a message from a common
  process, if all processes relay messages in each round.
We finally connect the dynamic radius to lower bounds for achieving consensus in
  dynamic networks.
\end{abstract}

\keywords{information dissemination, dynamic networks, graph radius}

%
\maketitle

\section{Introduction}

Consider a distributed system of $n \ge 1$ processes that operate in lock-step synchronous rounds.
Let $[n] = \{1, \dots, n\}$ be the set of processes.
In a round, each process broadcasts a message and receives messages from a subset
of other processes, specified by the directed communication graph $G = ([n], E)$
whose nodes are the processes and there is an edge $(u,v)$ in $E$ if and only if
process $v$ receives the message sent by process $u$.

The radius of communication graph $G$ is the number of rounds until all processes
have (transitively) received a message from a common process.
Its value thus poses a lower bound on the number of rounds until information, originating at a
single process, can be spread over the entire network.
Related applications are from disease spreading and opinion dynamics, where the radius
is a lower bound on the rounds it takes to spread a disease or
an opinion that originates at a single agent.

Of particular interest in distributed computing are networks that potentially change
during execution of an algorithm, be it due to faulty processes, faulty links,
mobility of the involved agents, etc.; see, e.g., \cite{Lyn96} for 
a comprehensive overview.
We thus generalize the investigation of the radius of a communication graph $G$
to the dynamic radius of a sequence of communication graphs $G_1, G_2, \dots$.
Here, it is assumed that in the above scenario of broadcasting distributed
processes, the communication graph for round $r \ge 1$ is $G_r$.
The {\em dynamic radius of the sequence $G_1, G_2, \dots$\/} is the
number of rounds until all processes have (transitively) received a message from a
common process.

\subsection{Radius of Nonsplit Digraphs}
A nonsplit digraph is a 
  directed graph where each pair of nodes has {\em at least one\/}
  common incoming neighbor.
In this work, we study the radius of nonsplit digraphs:
  with $\ell(u,v)$ denoting the length of the shortest path from node $u$ to node $v$,
  the radius is $\min_u \max_{v} \ell(u,v)$.

In the undirected case, the radius is trivially bounded by the diameter of the graph,
  which is $2$ in the case of nonsplit graphs.
Undirected graphs where each pair of nodes has {\em exactly one\/} common neighbor,
  have been studied by Erd\H{o}s et al.~\cite{ERS66}, who showed that they are exactly
  the windmill graphs, consisting of triangles that share a common node.
Thus, their radius is $1$.

As demonstrated by the example in Figure~\ref{fig:graph} with radius $3$, these bounds do not
  hold for nonsplit digraphs.
We will show the following upper bound:
\begin{thm}\label{thm:radius}
The radius of a nonsplit digraph with $n$ nodes is in $O(\log \log n)$.
\end{thm}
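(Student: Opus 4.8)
The plan is to reformulate the radius bound as a statement about intersecting families of ``in-balls'' and then to prove that these families acquire a common element after doubly-logarithmically many steps. For a node $v$ and integer $k\ge 0$ write $B_k(v)=\{x : \ell(x,v)\le k\}$ for the set of nodes that reach $v$ within $k$ steps. The first observation is that the radius is at most $k$ if and only if $\bigcap_{v} B_k(v)\neq\emptyset$: a node $u$ lies in this intersection exactly when $\ell(u,v)\le k$ for every $v$. Thus it suffices to find $k=O(\log\log n)$ for which the $n$ sets $B_k(v)$ share a common node. The nonsplit hypothesis supplies the base case: any two nodes have a common incoming neighbor $w\in\In(u)\cap\In(v)$, and since $w\in B_1(u)\cap B_1(v)$ the family $\{B_1(v)\}_v$ is pairwise intersecting.

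Second, I would prove a combination lemma that upgrades bounded intersection to higher-order intersection at the cost of one step. Suppose every $t$ of the sets $B_k(\cdot)$ have a common element. Given nodes $v_1,\dots,v_{2t-1}$, pick $a\in\bigcap_{i\le t}B_k(v_i)$ and $b\in\bigcap_{i\ge t}B_k(v_i)$, and let $x$ be a common incoming neighbor of $a$ and $b$, which exists by the nonsplit property. Since $x\in\In(a)$ and $a\in B_k(v_i)$ for $i\le t$ we get $x\in B_{k+1}(v_i)$, and symmetrically for $i\ge t$; hence $x$ is common to all $2t-1$ sets $B_{k+1}(v_i)$. Starting from $t=2$ at level $1$, this recursion yields a $(2^{k-1}+1)$-wise intersecting family at level $k$, so a common element appears once $2^{k-1}+1\ge n$. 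This already proves a radius bound of $O(\log n)$.

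Third---and this is where the doubly-logarithmic bound must come from---I would aim to replace the additive doubling of the intersection order by a multiplicative (squaring) recursion, driven by a double-counting/pigeonhole dichotomy. On one hand, counting the pairs covered by each node gives $\sum_{x}\binom{|\Out(x)|}{2}\ge\binom n2$, so some node has out-degree $\ge\sqrt{n-1}$; more generally, if the smallest in-ball $B_k(v_0)$ has size $s$, then---since every $B_k(v)$ meets $B_k(v_0)$---some $e\in B_k(v_0)$ lies in at least $n/s$ of the sets $B_k(v)$, i.e.\ $|\{v:\ell(e,v)\le k\}|\ge n/s$, an approximate center of large forward reach. On the other hand, if no such ``hub'' exists, then covering the $\binom{N}{2}$ pairs among any $N$ target nodes by relays of bounded out-degree forces the relevant neighbourhoods to be quadratically large. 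Combining the two horns should yield a progress measure (the minimum in-ball size, equivalently the maximum forward reach) obeying $s\mapsto s^2$ until it reaches $n$, giving $k=O(\log\log n)$.

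The main obstacle is precisely this last step: converting the pairwise-only guarantee of common incoming neighbors into genuine squaring rather than mere doubling. The combination lemma is wasteful because collapsing $t$ group-representatives into one node at the next level would require a common incoming neighbor of all of them, which the nonsplit property does not provide. The crux is therefore to make the hub-versus-spread dichotomy quantitative---showing that whenever the progress measure fails to square, an approximate center must already exist---and to verify that the relays produced by the counting argument are themselves reached within the current budget, so that edge directions are respected throughout. Controlling the constants in the squaring recursion and cleaning up the last few unreached nodes are the remaining routine points.
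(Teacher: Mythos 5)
Your first two steps are sound, but they only reprove the previously known bound: the reformulation of radius $\le k$ as $\bigcap_v B_k(v)\neq\emptyset$ and your combination lemma (two groups of $t$ indices sharing one element, merged through a common incoming neighbor of the two representatives) are exactly the halving induction behind the paper's Lemma~\ref{lem:log:bound}, i.e.\ the Charron-Bost--Schiper $O(\log n)$ argument. The third step, where the doubly-logarithmic bound must come from, is not a proof --- and you say so yourself. The hub-versus-spread dichotomy and the claimed squaring recursion $s\mapsto s^2$ are unestablished, and nothing in the nonsplit hypothesis supports them as stated: nonsplitness guarantees common \emph{incoming} neighbors of pairs, which is a backward-looking property, and it gives no expansion guarantee on the out-neighborhood of a specific reached set. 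A node of forward reach $s$ need not reach anything like $s^2$ nodes after $O(1)$ further steps; this is precisely the ``dead-end of infection'' phenomenon the paper warns about and the reason its covering argument runs backward in time. Your pigeonhole observation (if the minimum in-ball has size $s$, some node has forward reach $\ge n/s$) is correct but is only one horn of the dichotomy; the other horn --- ``if no hub exists, neighbourhoods are quadratically large'' --- is the entire content of the theorem and is missing.

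The paper closes this gap with a different mechanism, not a squaring progress measure. By the doubling lemma applied to \emph{small} sets, every set of $\lfloor\log n\rfloor$ nodes is covered by a single node within $\lceil\log_2\log n\rceil$ steps; this defines a map $f:\binom{[n]}{\lfloor\log n\rfloor}\to[n]$. A counting argument (Lemma~\ref{lem:inverse:image:size}, comparing the pigeonhole lower bound $\binom{k}{\lfloor\log n\rfloor}/n$ on the number of preimages of some node against the upper bound $\binom{m}{\lfloor\log n\rfloor}$ imposed by the size $m$ of their union) shows that some single node $f$-covers a constant fraction $1/e^4$ of the remaining nodes, so greedy iteration yields $O(\log n)$ nodes covering all of $[n]$ within $\lceil\log_2\log n\rceil$ steps (Lemma~\ref{lem:log:at:depth:loglog}). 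One final application of the doubling lemma covers these $O(\log n)$ nodes from a single node in $\lceil\log_2(C\log n)\rceil$ further steps, and transitivity of covering (Lemma~\ref{lem:cover:combination}) gives radius $O(\log\log n)$. The missing idea in your plan is exactly this ``cover all $\log n$-subsets, then pigeonhole on the covering map'' step; without it, or a genuine proof of your squaring dichotomy, your argument establishes only $O(\log n)$.
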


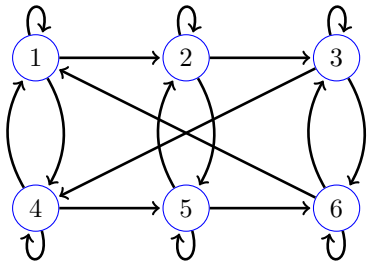
\begin{figure}[hbt]
\centering
\begin{tikzpicture}[shorten >=1pt, auto, node distance=3cm]
    \tikzstyle{node_style} = [circle,draw=blue]
    \tikzstyle{edge_style} = [->,draw=black, line width=1]
    
    \node[node_style] (v1) at (0,0) {1};
    \node[node_style] (v2) at (2,0) {2};
    \node[node_style] (v3) at (4,0) {3};

    \node[node_style] (v4) at (0,-2) {4};
    \node[node_style] (v5) at (2,-2) {5};
    \node[node_style] (v6) at (4,-2) {6};
    
    \draw[edge_style]  (v1) to (v2);
    \draw[edge_style]  (v2) to (v3);

    \draw[edge_style]  (v4) to (v5);
    \draw[edge_style]  (v5) to (v6);

    \draw[edge_style]  (v6) to (v1);
    \draw[edge_style]  (v3) to (v4);

    \draw[edge_style]  (v1) to[bend left] (v4);
    \draw[edge_style]  (v4) to[bend left] (v1);
	\draw[edge_style]  (v2) to[bend left] (v5);
    \draw[edge_style]  (v5) to[bend left] (v2);
	\draw[edge_style]  (v3) to[bend left] (v6);
    \draw[edge_style]  (v6) to[bend left] (v3);

    \draw[edge_style]  (v1) to[loop above] (v1);
    \draw[edge_style]  (v2) to[loop above] (v2);
    \draw[edge_style]  (v3) to[loop above] (v3);
    \draw[edge_style]  (v4) to[loop below] (v4);
    \draw[edge_style]  (v5) to[loop below] (v5);
    \draw[edge_style]  (v6) to[loop below] (v6);

    \end{tikzpicture}
\caption{Nonsplit digraph with radius $3$. For example, node $1$ and $6$ have common incoming neighbor $6$,
  while nodes $1$ and $5$ have node $4$ as common incoming neighbor.}
\label{fig:graph}
\end{figure}

\subsection{Communication over Nonsplit Digraphs}
Nonsplit digraphs naturally occur as communication graphs in classical fault-models and
  as models for dynamic networks.

In fact, several classical fault-models were shown to lead to nonsplit
  communication graphs~\cite{CFN15:ICALP}, among them link failures, as considered in~\cite{SW89},
  and asynchronous message passing systems with crash failures~\cite{Lyn96}.
Nonsplit digraphs thus represent a convenient abstraction to these classical fault-models.
We will show in Section~\ref{sec:async} that nonsplit digraphs arising from  
  the classical model of asynchronous messages and crashes have dynamic radius at
  most~$2$.

The study of nonsplit digraphs is also motivated by the study of a central problem in distributed computing:
  Agreeing on a common value in a distributed system is a problem that lies at the heart of many
  distributed computing problems, occurs in several flavors, and thus received considerable
  attention in distributed computing.
However, even modest network dynamics already prohibit solvability of exact consensus,
  where agents have to decide on a single output value that is within the range of the agents'
  initial values~\cite{SW89}.
For several problems, e.g., distributed control, clock synchronization, load balancing, etc.,
  it is sufficient to asymptotically converge to the same value (asymptotic consensus),
  or decide on values not too far from each other (approximate consensus).
Charron-Bost et al.\ \cite{CFN15:ICALP} showed that both problems are
  solvable efficiently in the case of communication graphs that may vary arbitrarily, but are required to
  be nonsplit.
They further showed that in the more general case where all communication graphs are required to
  contain a rooted spanning tree, one can in fact simulate nonsplit communication graphs,
  leading to efficient algorithms for asymptotic and approximate consensus.

Motivated by this work on varying communication graphs, we will show that the following generalization
  of Theorem~\ref{thm:radius} holds:
\begin{thm}\label{thm:radius2}
The dynamic radius of a network on~$n$ nodes
whose communication graphs are all nonsplit is $O(\log\log n)$.
\end{thm}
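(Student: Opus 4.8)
The plan is to encode dynamic information flow by the composition of the round graphs. For digraphs $G$ and $H$ on $[n]$, let $G\cdot H$ be the digraph in which $(u,w)$ is an edge iff there is $v$ with $(u,v)\in E(G)$ and $(v,w)\in E(H)$. A message originating at $u$ reaches $v$ by round $t$ exactly when $(u,v)$ is an edge of the window product $G_{[1,t]}:=G_1\cdot G_2\cdots G_t$; since a relaying process always keeps its own message, I take all graphs reflexive (self-loops on every node), which makes the products monotone and lets ``within $t$ rounds'' coincide with edges of $G_{[1,t]}$. In these terms the dynamic radius is the least $t$ for which $G_{[1,t]}$ has a \emph{universal source}, i.e.\ a node $u$ with $(u,v)\in E(G_{[1,t]})$ for every $v$. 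Crucially, $u$ may be chosen after the whole (adversarial) sequence $G_1,G_2,\dots$ is revealed, so any source-selection that the static proof performs as a function of the graph remains available here.

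\textbf{Closure of nonsplitness under products.}
The key structural lemma I would establish first is that \emph{the product of two nonsplit digraphs is nonsplit}, hence by induction every window product $G_{[i,j]}$ is nonsplit. The argument is short: given a target pair $w,w'$, apply nonsplitness of the \emph{later} factor $G_2$ to obtain a common incoming neighbor $v$ with $(v,w),(v,w')\in E(G_2)$; then pick any $u$ with $(u,v)\in E(G_1)$, which exists because in a nonsplit digraph every node has an incoming neighbor (apply nonsplitness to the pair $(v,v)$). The two dynamic paths $u\to v\to w$ and $u\to v\to w'$ witness $(u,w),(u,w')\in E(G_1\cdot G_2)$, so $u$ is a common incoming neighbor of $w,w'$ in the product. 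This closure is what makes the round-by-round analysis legitimate even though a different graph is used in each round.

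\textbf{Lifting the static iteration.}
Theorem~\ref{thm:radius} is the special case $G_1=\cdots=G_t=G$ of the statement to be proved, so the strategy is to revisit its proof and verify that each of its $O(\log\log n)$ stages is \emph{localizable to a single round}. The doubly-logarithmic bound comes from the counting consequence of nonsplitness: a nonsplit digraph on $m$ nodes has a node whose out-neighborhood has size $\Omega(\sqrt{m})$ (the $m$ closed out-neighborhoods must cover all $\binom{m}{2}$ pairs, forcing one of out-degree $\Omega(\sqrt m)$), which drives a potential measuring the distance to a common source to shrink doubly-exponentially per step. I would track this potential over the window products: at stage $r$ the expansion step invokes nonsplitness of $G_r$ (or, if a stage consumes a bounded block of rounds, of the product over that block, which is nonsplit by the closure lemma), rather than nonsplitness of a fixed graph. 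Because the closure lemma guarantees every consecutive window is nonsplit, every stage's invocation is valid, and the number of stages---hence the dynamic radius---remains $O(\log\log n)$.

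\textbf{Main obstacle.}
The delicate point is not the closure lemma but confirming that the static potential genuinely squares \emph{per round} using only the \emph{current} round's nonsplitness, with no hidden reliance on reusing the same graph across rounds; pinning down the right potential and verifying its one-round doubly-exponential contraction (together with the fact, noted above, that the eventual common source may depend on the entire revealed sequence) is where the real work lies.
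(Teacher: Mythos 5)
Your framework (window products, choosing the eventual source in hindsight) matches the paper's setup, and your closure lemma is correct as stated --- though in the paper it appears only implicitly, as associativity of the product plus transitivity of ``covering'' (Lemma~\ref{lem:cover:combination}), and it is not where any difficulty lies. The problem is that everything after that is deferred: you explicitly park ``pinning down the right potential and verifying its one-round doubly-exponential contraction'' as the remaining work, and that is precisely the theorem's entire content. Worse, the plan is circular relative to the paper: there is no independent static proof of Theorem~\ref{thm:radius} to ``lift'' --- the paper proves the dynamic statement directly and obtains the static case as a corollary --- so ``revisit the proof of Theorem~\ref{thm:radius} and check each stage is localizable to a single round'' has no object to act on.

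More substantively, the mechanism you sketch cannot be made to work. First, forward expansion of an infected set can stall completely under nonsplitness: if every in-edge of $V\setminus S$ originates outside $S$ (e.g., each round is a star, with self-loops, centered outside $S$), then $S$ never grows; this is exactly the ``dead-end of infection'' the paper flags as the reason its argument runs \emph{backward} in time, covering $[n]$ by ever smaller sets at earlier times. Second, the hoped-for per-round squaring is unavailable, not merely unverified: all that one round of nonsplitness yields for a covering set $W$ is pair-merging --- $W$ is covered one round earlier by $\lceil\lvert W\rvert/2\rceil$ nodes (the recursion in Lemma~\ref{lem:log:bound}) --- and once $\lvert W\rvert\ll\sqrt{n}$ the common in-neighbors of distinct pairs of $W$ may be pairwise distinct, so no single round can do better than halving; iterating halving only reproduces the $O(\log n)$ bound of Charron-Bost and Schiper. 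Your $\Omega(\sqrt{m})$-out-degree observation is true but does not iterate, because the counting denominator stays $n$ rather than shrinking with the covered set. The missing idea is the paper's two-phase argument: within a window of $\lceil\log_2\log n\rceil$ rounds, apply the halving recursion to \emph{every} $\lfloor\log n\rfloor$-element subset simultaneously, obtaining a map $f\colon\binom{[n]}{\lfloor\log n\rfloor}\to[n]$; a pigeonhole estimate on binomial coefficients (Lemma~\ref{lem:inverse:image:size}) then produces a single node whose preimages cover a constant fraction $\geq k/e^{4}$ of any surviving node set, so a greedy choice of $O(\log n)$ nodes covers all of $[n]$ across that one window (Lemma~\ref{lem:log:at:depth:loglog}); another $\lceil\log_2(C\log n)\rceil$ rounds of halving reduce those $O(\log n)$ nodes to a single source. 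Nonsplitness of the window product --- your closure lemma, even applied to blocks --- is far too weak to deliver this, since it again yields only halving.
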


Traditionally, information dissemination is studied w.r.t.\ either all-to-all message
relay or the time it takes for a fixed process to broadcast its
message to everyone \cite{HLPR13,HHL88}.
In dynamic networks with nonsplit communication graphs, however, such
strong forms of information dissemination are impossible.
This can easily be seen by constructing appropriate sequences of star graphs
(with self-loops), which are a degenerate case of a nonsplit graph.

One-to-all broadcast of \emph{some} process, on the other hand, is readily achieved in such networks, which is why we focus on this characteristic here.
While it is certainly not as universal as the previously mentioned primitives,
it turns out that this type of information dissemination is crucial to the
termination time of certain consensus algorithms based on vertex-stable root
components \cite{WSS16}.
Furthermore, we show the following theorem, relating the dynamic radius and the
termination time of arbitrary consensus algorithms:

\begin{thm}
  If the dynamic radius of a sequence of communication graphs is $k$, then, in
  every deterministic consensus algorithm, some process has not terminated before time $k$.
\end{thm}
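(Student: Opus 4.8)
The plan is to prove the contrapositive-style indistinguishability argument that is standard in consensus lower bounds. Suppose the dynamic radius of the sequence $G_1, G_2, \dots$ is $k$. This means that at the end of round $k-1$, there is no process $p$ such that every process has (transitively) received a message from $p$; equivalently, for each candidate "source" process, at least one process is still unaware of it after $k-1$ rounds. I want to leverage this to construct two executions that are indistinguishable to some process through time $k-1$, yet would force contradictory decisions if that process terminated early.

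First I would fix the state of the system after $k-1$ rounds and examine the information flow. The key structural fact is that the dynamic radius being exactly $k$ guarantees that after $k-1$ rounds the causal past of the processes does not yet cover a common origin: there exist (at least) two processes $u$ and $w$ whose round-$0$ initial values have not both reached some common process $q$. More carefully, I would argue that for the execution to have dynamic radius $k$ rather than less, there must be a process $q$ and two initial states differing only at some process $p$ (whose message has not reached $q$ by round $k-1$) such that $q$'s entire view through round $k-1$ is identical in both. The definition of the dynamic radius directly supplies such a $p$ and $q$: since $q$ has not received a message from every process by the end of round $k-1$, pick $p$ to be a process whose message has not reached $q$.

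Next I would run a standard indistinguishability argument: construct two executions $\alpha$ and $\alpha'$ of the consensus algorithm on the same graph sequence that differ only in the initial value of $p$, chosen so that the two admissible consensus outputs differ (this uses validity/agreement of the consensus specification). Because $p$'s initial value has not influenced $q$'s state through round $k-1$, the sequence of states and messages observed by $q$ is identical in $\alpha$ and $\alpha'$. Hence if $q$ were to decide strictly before time $k$, it would produce the same decision value in both executions, contradicting the requirement that the two executions must yield different agreed values. Therefore $q$ cannot have terminated before time $k$, which is the claim.

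The main obstacle I anticipate is pinning down the precise correspondence between "the dynamic radius is $k$" and "there is a process $p$ whose initial value is causally invisible to some process $q$ through round $k-1$," since the dynamic radius is defined in terms of \emph{some common} source rather than \emph{every} source. The subtle point is that having dynamic radius $k$ means no single common source has reached everyone by round $k-1$, and I must translate this into the existence of a specific $(p,q)$ pair with the no-causal-influence property needed to drive the indistinguishability; handling the case analysis so that the two executions genuinely force disagreement (and not merely a permitted coincidence of outputs) is where I would spend the most care, likely invoking a two-valued input configuration and the agreement/validity conditions of deterministic consensus.
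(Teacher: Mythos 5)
Your overall strategy --- indistinguishability via two executions that differ only in one process's input, exploiting the absence of a broadcaster in the first $k-1$ rounds --- is exactly the paper's. But there is a genuine gap at the step where you say the two executions are ``chosen so that the two admissible consensus outputs differ (this uses validity/agreement).'' For an input assignment that mixes $0$s and $1$s, validity permits \emph{either} decision value, so nothing lets you simply choose a process $p$ whose input flip is guaranteed to flip the decision. You flag this difficulty yourself in your final paragraph, but you never close it, and your middle paragraph in fact has the quantifiers backwards: you propose to first pick a process $q$ that is missing some message and then pick $p$ as a process $q$ has not heard from. That order cannot work, because the pair of input assignments at which the decision flips is dictated by the algorithm, not chosen by you; for your particular $(p,q)$ there is no reason the two executions decide differently, and the argument would only exhibit a permitted coincidence of outputs rather than a contradiction.

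The paper closes exactly this gap with the standard chain (hypercube) argument, after which the quantifiers come out in the right order. Assuming for contradiction that every process has terminated by time $k-1$ under a pattern $\mathcal{G}$ of radius $k$, determinism together with agreement makes the common decision a well-defined function of the input assignment; by validity it is $0$ on the all-zeros assignment and $1$ on the all-ones assignment, so walking from one to the other flipping one input at a time yields adjacent assignments $C, C'$, differing only at some process $j$, on which the decision differs. Only \emph{now} does the radius hypothesis enter: $T(\mathcal{G}) = \min_{i} T_i(\mathcal{G}) = k$ means that \emph{no} process --- in particular this $j$, whichever one the chain produces --- is a broadcaster in $G_1 \circ \cdots \circ G_{k-1}$, so some process $i'$ has not transitively received anything from $j$ by time $k-1$, is in the same state in both executions, and decides the same value in both; contradiction. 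So your instinct about where the subtlety lies was correct, but the missing idea is to select $j$ first via the decision flip along the chain and only then invoke the universally quantified no-broadcaster property to obtain $i'$; as written, your construction of the disagreeing pair does not go through.
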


Finally, we note that the dynamic radius is also an upper bound for a single process aggregating
the data of all other processes, when we use the
alternative interpretation of an edge $(i, j)$ in a communication graph
as a message sent by $j$ and received by $i$. 
Even though this might not be the desired form of data aggregation in a
standard setting, in a scenario where the communication is so constrained that
aggregation by an \emph{a priori} selected process is simply unobtainable, such a weak
form
might still be useful to transmit the collected data to a dedicated sink at
regular intervals, for example.

We give a brief overview on related work in the next section.

\subsection{Related Work}
\label{sec:relwork}

Information dissemination among an ensemble of $n$ participants is a fundamental
question that has been studied in a grand variety of settings and flavors (see
\cite{FL94,HLPR13,HHL88,HKMP96} for various reviews on the topic).
While traditional approaches usually assume a static underlying network topology, with
the rise of pervasive wireless devices,
more recently, focus has shifted to dynamically changing network
topologies \cite{KLO10:STOC}.
A useful way of viewing the distribution of information is to denote the pieces
of information that should be shared among the participants as tokens.
For instance, the all-to-all token dissemination problem investigates the complete
dissemination of $n$ initially distributed tokens.
This problem was studied in \cite{KLO10:STOC} with a focus on bounds for
the time complexity of the problem, i.e., how long it takes at least, resp.\ at
most, until $n$ tokens have been received by everyone.
Here, the participants employed a token-forwarding algorithm mechanism, where
tokens are stored and forwarded but not altered.

In the model of \cite{KLO10:STOC}, it was assumed that the communication graphs
are connected and undirected.
For this, a lower bound of $\Omega(n \log n)$ and an upper bound of $O(n^2)$ for
all-to-all token dissemination was established in the case where $n$ is unknown
to the participants, they have to terminate when the broadcast is finished, and
the system is $1$-interval connected, i.e., the communication graphs are
completely independent of each other.
In contrast, if the communication graphs are connected, directed, and rooted,
in the worst case only one of the tokens may ever be delivered to all
participants.
This can be seen, for example, when considering a dynamic graph
that produces the same line graph for every round.
We note that this example also provides a trivial lower bound of
$\Omega(n)$ rounds until one token is received by everyone for the first
time.
As far as we are aware, the best such lower bound was established in
\cite[Theorem 4.3]{SZS17} to be $\lceil (3n-1)/2-2 \rceil$ rounds.
Studying directed graphs is desirable as they represent a weaker, more general
model and wireless communication is often inherently directed, for example due
to localized fading or interference phenomena \cite{SBB12,GKFS10:perf} such as
the capture effect or near-far problems \cite{WJC00}.

In \cite{CBS09}, it was shown that the dynamic radius of a sequence of
arbitrary nonsplit communication graphs is $O(\log n)$.
Later, it was shown in \cite{CFN15:ICALP} that the product
of any $n-1$ rooted communication graphs is nonsplit.
Put together, this means that the dynamic radius of a sequence of arbitrary
rooted graphs is $O(n \log n)$.
More recently, \cite{SZS17} provided an alternative proof for this fact that
does not rely on the reduction to nonsplit graphs but instead uses a notion
of influence sets similar to \cite[Lemma 3.2.(b)]{KLO10:STOC}.
In addition to this, \cite{SZS17} provided linear $O(n)$ bounds in sequences of
rooted trees with a constant number of leaves or inner nodes, established a
dependency on the size of certain subtrees in sequences of rooted trees where
the root remains the same, and investigated sequences of undirected trees.

\section{Model and Definitions} \label{sec:model}

We start with some definitions motivated by the study of information dissemination
  within a distributed system of $n$ processes that operate in discrete,
  lock-step synchronous communication rounds.
Starting with information being available only locally to each process,
  processes broadcast and receive information tokens in every round.
We are interested in the earliest round where all processes have received an information token
  from a common process.

Clearly, the dissemination dynamics depends on the dynamics of the underlying network.
For this purpose we define:
A {\em communication graph on $n$ nodes\/} is a directed graph $G=(V,E)$ with self-loops and
the set of nodes $V = [n] = \{1,2,\dots,n\}$.
For $i \in [n]$, let $\In_i(G) = \{ j \in [n] \mid (j, i) \in E \} $ denote the
set of in-neighbors of $i$ in $G$ and $\Out_i(G) = \{ j \in [n] \mid (i, j) \in
E \}$ denote its set of out-neighbors.
Intuitively, communication graphs encode successful message reception within a round:
  an edge from $i$ to $j$ states that $j$ received the message broadcast by $i$ in this
  round.

A node $i\in [n]$ is called a {\em broadcaster in~$G$\/} if it has an edge
to all nodes, i.e.,
$\forall j\in [n]\colon (i,j)\in E$.

A communication graph $G=([n],E)$ is {\em nonsplit\/} if
every pair of nodes has a common incoming neighbor, i.e.,
\[
\forall i,j\in[n]\ \exists k\in[n] \colon (k,i)\in E \wedge (k,j)\in E
\enspace.
\]

Given two communication graphs $G = ([n],E_G)$ and $H = ([n],E_H)$ on $n$
nodes, define their
{\em product graph\/} as $G\circ H = ([n], E_{G\circ H})$ where
\[
(i,j) \in E_{G\circ H}
\iff
\exists k\in [n] \colon (i,k)\in E_G \wedge (k,j)\in E_H
\enspace.
\]
The empty product is equal to the communication graph $([n],E_\perp)$ which
contains the self-loops $(i,i)$ for all nodes~$i$ and no other edges.
The graph product we use here is motivated by information dissemination within
  distributed systems of processes that continuously relay information tokens that they received:
  if $k$ received $i$'s information token in a round, and $j$ received $k$'s information token
  in the next round, then $j$ received $i$'s information token in the macro-round formed
  by these two successive rounds.

Motivated by modeling communication networks that potentially
  change in each round, we call each infinite sequence
$\mathcal{G} = (G_1,G_2,G_3,\dots)$ of communication
graphs on~$n$ nodes a {\em communication pattern on~$n$ nodes}.
For every node $i \in [n]$, define the {\em broadcast time $T_i(\mathcal{G})$
of node~$i$  in~$\mathcal{G}$\/} as the minimum~$t$ such that~$i$ is a
broadcaster in the product of the first~$t$ communication graphs of~$\mathcal{G}$.
If no such~$t$ exists, then $T_i(\mathcal{G}) = \infty$.
The {\em dynamic radius $T(\mathcal{G})$ of~$\mathcal{G}$\/} is the minimal
broadcast time of its nodes, i.e.,
$\displaystyle T(\mathcal{G}) = \min_{i\in[n]} T_i(\mathcal{G})$.
Note that $T(\mathcal{G})$ is the earliest time, in terms of rounds,
  until that all nodes have received an information token from a common node,
  given that the communication pattern is $\mathcal{G}$.

A {\em network on $n$ nodes\/} is a nonempty set of communication patterns
  on~$n$ nodes; modeling potential uncertainty in a dynamic communication network.
A network's {\em dynamic radius\/} is defined as the supremum over all dynamic
  radii of its communication patterns, capturing the worst-case of information
  dissemination within this network.

\section{The Dynamic Radius of Nonsplit Networks}

In this section we show an upper bound on the dynamic radius of nonsplit
networks.

During the section,
let $\mathcal{G} = (G_1,G_2,G_3,\dots)$ be a communication pattern on~$n$ nodes
in which every communication graph~$G_t$ is nonsplit.

In order to prove an upper bound on the dynamic radius of $\mathcal{G}$,
  we will prove the existence of a relatively small set of nodes that "infects" all other nodes
  within only $O(\log\log n)$ rounds.
Iteratively going back in time, it remains to show that any such set is itself "infected"
  by successively smaller and smaller sets within $O(\log\log n)$ rounds,
  until we reach a single node.
It follows that this single node has "infected" all nodes with its information token
  after $O(\log\log n)$ rounds.

Note that the strategy to follow "infection" back in time rather than
  consider the evolution of infected sets over time is essential in our proofs:
  it may very well be that a certain set of infected nodes cannot infect
  other nodes from some time on, since it only has incoming edges from 
  nodes not in the set in all successive communication graphs. 
Going back in time prevents us to run into such dead-ends of infection.

For that purpose we define: Let $U,W\subseteq [n]$ be sets of nodes.
We say that~$U$ {\em covers~$W$ in communication graph $G=([n],E)$\/} if
for every $j\in W$ there is some $i\in U$ that has an edge to~$j$, i.e.,
$\forall j\in W\ \exists i\in U\colon (i,j)\in E$.

Now let $0 < t_1 \leq t_2$.
We say that {\em $U$ at time~$t_1$ covers~$W$ at time~$t_2$\/} if~$U$ covers~$W$
in the product graph $G_{t_1} \circ G_{t_1+1}\circ \cdots\circ G_{t_2-1}$.

Note that~$U$ at time~$t$ covers~$U$ at time~$t$ for all sets $U\subseteq [n]$
and all $t\geq 1$, by definition of the empty product as the digraph with only
self-loops.

We first show that the notion of covering is transitive:

\begin{lem}\label{lem:cover:combination}
Let $0 < t_1 \leq t_2 \leq t_3$ and let $U,W,X\subseteq [n]$.
If~$U$ at time~$t_1$ covers~$W$ at time~$t_2$, and~$W$ at time~$t_2$ covers~$X$
at time~$t_3$, then $U$ at time~$t_1$ covers~$X$ at time~$t_3$.
\end{lem}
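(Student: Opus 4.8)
The plan is to reduce the statement to the associativity of the graph product together with a single application of the definitions of covering. The central observation is that the product operation $\circ$ behaves like Boolean matrix multiplication: an edge $(i,j)$ in $G \circ H$ records the existence of an intermediate node $k$ with $(i,k) \in E_G$ and $(k,j) \in E_H$, exactly as in a Boolean matrix product, and the empty product $([n],E_\perp)$ plays the role of the identity matrix. Consequently I expect the product to be associative, and in particular
\[
G_{t_1} \circ \cdots \circ G_{t_3-1} = \bigl(G_{t_1} \circ \cdots \circ G_{t_2-1}\bigr) \circ \bigl(G_{t_2} \circ \cdots \circ G_{t_3-1}\bigr),
\]
where, when $t_1 = t_2$ or $t_2 = t_3$, the corresponding factor is the empty product $([n],E_\perp)$ and the identity still holds, since composing with $E_\perp$ leaves a graph unchanged. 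Writing $P_{12}$, $P_{23}$, and $P_{13}$ for the three product graphs appearing above, the goal becomes: if $U$ covers $W$ in $P_{12}$ and $W$ covers $X$ in $P_{23}$, then $U$ covers $X$ in $P_{13} = P_{12} \circ P_{23}$.

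First I would fix an arbitrary target node $x \in X$ and chase a path back through the two covering relations. Since $W$ covers $X$ in $P_{23}$, there is some $w \in W$ with $(w,x) \in E_{P_{23}}$. Since $U$ covers $W$ in $P_{12}$, there is some $u \in U$ with $(u,w) \in E_{P_{12}}$. Applying the definition of the product $P_{12} \circ P_{23}$ with the witness $k = w$, the two edges $(u,w) \in E_{P_{12}}$ and $(w,x) \in E_{P_{23}}$ combine to give $(u,x) \in E_{P_{12} \circ P_{23}} = E_{P_{13}}$. As $u \in U$ and $x \in X$ was arbitrary, this shows that $U$ covers $X$ in $P_{13}$, that is, $U$ at time $t_1$ covers $X$ at time $t_3$.

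The only real work lies in justifying the displayed associativity identity, which I would establish by a short induction on the number of factors (or by invoking the standard associativity of Boolean matrix multiplication, with $E_\perp$ as the neutral element). The edge cases $t_1 = t_2$ and $t_2 = t_3$ deserve an explicit check: there the relevant covering relation reduces, via the self-loop-only structure of $E_\perp$, to a plain set inclusion (for instance, $U$ covers $W$ in the empty product exactly when $W \subseteq U$), and one verifies directly that the back-chasing argument goes through unchanged. I expect this bookkeeping around the empty product to be the main---though still routine---obstacle; the covering chase itself is immediate once associativity is in hand.
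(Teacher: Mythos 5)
Your proposal is correct and follows essentially the same route as the paper's proof: both reduce the claim to the associativity identity $\bigl(G_{t_1}\circ\cdots\circ G_{t_2-1}\bigr)\circ\bigl(G_{t_2}\circ\cdots\circ G_{t_3-1}\bigr)=G_{t_1}\circ\cdots\circ G_{t_3-1}$ and then chase a witness $u\in U$, $w\in W$ for each $x\in X$ through the definition of the product. Your explicit handling of the degenerate cases $t_1=t_2$ and $t_2=t_3$ via the identity role of the empty product $([n],E_\perp)$ is a minor refinement the paper leaves implicit, but the argument is the same.
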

\begin{proof}
By definition, for all $k\in W$ there is some $i\in U$
such that $(i,k)$ is an edge of the product graph
$G_{t_1}\circ\cdots\circ G_{t_2-1}$.
Also, for all $j\in X$ there is some $k\in W$
such that $(k,j)$ is an edge of the product graph
$G_{t_2}\circ\cdots\circ G_{t_3-1}$.

But, by the associativity of the graph product, this means that for all $j\in X$
there exists some $i\in U$ such that $(i,j)$ is an edge in the product graph
\[
\big(G_{t_1}\circ\cdots\circ G_{t_2-1}\big)
\circ
\big(G_{t_2}\circ\cdots\circ G_{t_3-1}\big)
=
G_{t_1}\circ\cdots\circ G_{t_3-1}
\enspace.
\]
That is, $U$ at time~$t_1$ covers~$X$ at time~$t_3$.
\end{proof}

\begin{lem}\label{lem:log2:ceil}
For all $x \ge 1$ we have 
$\big\lceil\log_2 x\big\rceil = \big\lceil\log_2 \lceil x\rceil \big\rceil$.
\end{lem}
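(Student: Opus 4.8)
The plan is to prove both inequalities $\lceil\log_2 x\rceil \le \lceil\log_2\lceil x\rceil\rceil$ and $\lceil\log_2\lceil x\rceil\rceil \le \lceil\log_2 x\rceil$ separately. The first is immediate from monotonicity: since $x \le \lceil x\rceil$ and $\log_2$ is increasing, $\log_2 x \le \log_2\lceil x\rceil$, and applying the (monotone) ceiling function preserves this. So the entire content of the lemma lies in the reverse inequality.

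For the reverse direction, I would abbreviate $k = \lceil\log_2 x\rceil$. Because $x \ge 1$ we have $\log_2 x \ge 0$, so $k$ is a nonnegative integer, and by definition of the ceiling, $\log_2 x \le k$, which rearranges to $x \le 2^k$. The crucial observation is that $2^k$ is itself an \emph{integer}. Hence $2^k$ is an integer that is at least $x$, and since $\lceil x\rceil$ is by definition the \emph{smallest} integer that is at least $x$, we conclude $\lceil x\rceil \le 2^k$. Taking $\log_2$ gives $\log_2\lceil x\rceil \le k$, and since $k$ is already an integer, applying the ceiling yields $\lceil\log_2\lceil x\rceil\rceil \le k = \lceil\log_2 x\rceil$, as desired. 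Combining the two inequalities closes the proof.

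The step I expect to be the only real subtlety is the integrality argument: one must notice that the upper bound $2^{\lceil\log_2 x\rceil}$ on $x$ happens to be an integer, so rounding $x$ up cannot push it past this bound. Everything else is routine monotonicity of $\log_2$ and of the ceiling function, together with the nonnegativity of $\log_2 x$ that follows from the hypothesis $x \ge 1$.
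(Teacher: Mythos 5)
Your proof is correct and rests on the same key observation as the paper's: since $2^k$ is an integer for the relevant $k$, the bound $x \le 2^k$ is equivalent to $\lceil x\rceil \le 2^k$. The paper obtains both directions at once via the characterization $\lceil\log_2 x\rceil = \min\{k\in\mathbb{Z} \mid x \le 2^k\}$, whereas you unpack the argument into two inequalities, but the substance is identical.
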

\begin{proof}
We have
$\lceil \log_2 x\rceil = \min\{k\in \mathds{Z} \mid x \leq 2^k \}$ if $x \ge 1$.
Now, noting that the inequality $x\leq p$ is equivalent to 
$\lceil x\rceil \leq p$ whenever~$p$ is an integer
concludes the proof.
\end{proof}

\begin{lem}\label{lem:log2:twosets}
Let~$m$ and~$n$ be positive integers such that $\lvert m-n\rvert\leq1$.
Then $\lceil\log_2(m+n)\rceil \geq \lceil \log_2 m \rceil + 1$.
\end{lem}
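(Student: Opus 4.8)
The plan is to reduce the statement to a single threshold comparison against a power of two, using the characterization of the ceiling of the logarithm established in the proof of Lemma~\ref{lem:log2:ceil}: for $x \ge 1$ one has $\lceil \log_2 x \rceil = \min\{k \in \mathds{Z} \mid x \le 2^k\}$. From this, the inequality $\lceil \log_2 y \rceil \ge k+1$ holds precisely when $y > 2^k$. Applying this with $y = m+n$ and $k = \lceil \log_2 m \rceil$, the goal $\lceil \log_2(m+n) \rceil \ge \lceil \log_2 m \rceil + 1$ becomes equivalent to the single inequality $m + n > 2^{\lceil \log_2 m \rceil}$, which is what I would aim to establish.

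First I would fix $k = \lceil \log_2 m \rceil$ and extract a lower bound on $m$ in terms of $k$. For $m \ge 2$ the defining property of the ceiling gives $2^{k-1} < m$, hence $m \ge 2^{k-1} + 1$ since $m$ is an integer. Second, I would use the hypothesis $\lvert m-n\rvert \le 1$ in the form $n \ge m - 1$, so that $m + n \ge 2m - 1$. Chaining these two bounds yields $m + n \ge 2(2^{k-1}+1) - 1 = 2^k + 1 > 2^k$, which by the equivalence above is exactly $\lceil \log_2(m+n) \rceil \ge k+1$, as desired.

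The one step requiring separate care is the boundary case $m = 1$, where $k = 0$ and the estimate $2^{k-1} < m$ no longer provides useful information. Here I would argue directly: since $n$ is a positive integer, $m + n \ge 2 > 1 = 2^0$, which again gives $\lceil \log_2(m+n) \rceil \ge 1 = \lceil \log_2 m \rceil + 1$. I expect this edge case to be the main, if minor, obstacle; the remainder is a routine chain of integer inequalities. A cosmetic alternative would be to phrase everything through the identity $\lceil \log_2(2m) \rceil = \lceil \log_2 m \rceil + 1$ together with monotonicity of the logarithm, but this runs into the subtlety that $m+n$ may equal $2m-1 < 2m$, so the direct threshold argument above is cleaner.
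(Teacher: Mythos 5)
Your proof is correct, and it reaches the paper's conclusion by a mildly but genuinely different finishing step. You share the paper's skeleton: the reduction $n \ge m-1$, the case split between $m=1$ and $m \ge 2$, and the bound $m+n \ge 2m-1$. Where you diverge is in how the inequality $\lceil\log_2(2m-1)\rceil \ge \lceil\log_2 m\rceil + 1$ is established. The paper rewrites $\lceil\log_2(2m-1)\rceil = \bigl\lceil\log_2\bigl(m-\tfrac{1}{2}\bigr)\bigr\rceil + 1$ and then invokes Lemma~\ref{lem:log2:ceil} with $x = m-\tfrac{1}{2}$ to conclude $\bigl\lceil\log_2\bigl(m-\tfrac{1}{2}\bigr)\bigr\rceil = \lceil\log_2 m\rceil$, so the integrality of $m$ enters implicitly through that rounding lemma. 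You instead work directly with the threshold characterization $\lceil\log_2 x\rceil = \min\{k\in\mathds{Z}\mid x\le 2^k\}$ (which is exactly the first line of the paper's proof of Lemma~\ref{lem:log2:ceil}): setting $k=\lceil\log_2 m\rceil$, the goal becomes $m+n > 2^k$, which follows from the integrality bound $m \ge 2^{k-1}+1$ and $m+n \ge 2m-1 \ge 2^k+1$. Both arguments hinge on the same underlying fact, namely that integrality absorbs the $-1$ in $2m-1$; your version makes this explicit and is self-contained at the level of the min-characterization, arguably more transparent, while the paper's version is slightly slicker in reusing the already-proved Lemma~\ref{lem:log2:ceil} via the half-integer trick rather than re-deriving a power-of-two estimate. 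Your treatment of the boundary case $m=1$ (where $k=0$ and $m+n \ge 2 > 2^0$) is sound and parallels the paper's explicit verification of that case.
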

\begin{proof}
W.l.o.g., let $n\geq m-1$.
We distinguish between two cases for positive integer $m$:

\medskip
\noindent (i) If $m = 1$ then $n \in \{1,2\}$, and we immediately obtain
the lemma from $\lceil\log_2(2)\rceil \geq \lceil \log_2 1 \rceil + 1$
and $\lceil\log_2(3)\rceil \geq \lceil \log_2 2 \rceil + 1$.

\medskip

\noindent (ii) Otherwise, $m \ge 2$. From $n\geq m-1$
we deduce $m+n\geq 2m-1$.
This implies
\[
\big\lceil \log_2 (m+n) \big\rceil 
\geq
\big\lceil \log_2 (2m-1) \big\rceil 
=
\left\lceil \log_2 \left(m-\frac{1}{2}\right) \right\rceil + 1
\enspace.
\]
We are hence done if we can show
$\big\lceil \log_2(m-\frac{1}{2})\big\rceil = \lceil \log_2 m \rceil$.
But this is just Lemma~\ref{lem:log2:ceil} with $x=m-\frac{1}{2} \ge 1$.
\end{proof}

\begin{lem}\label{lem:log2:partition}
Let~$n$ and~$m$ be positive integers such that $n \ge m$.
Then there exist positive integers $n_1,n_2,\dots,n_m$ such that
$n = n_1 + \cdots + n_m$ and
$\lceil \log_2 \frac{n}{m} \rceil \geq \lceil\log_2 n_i\rceil$
for all $1\leq i\leq m$.
\end{lem}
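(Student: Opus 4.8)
The plan is to distribute $n$ objects into $m$ bins as evenly as possible and verify that the fullest bin still satisfies the logarithmic bound. Concretely, write $n = qm + r$ with $0 \le r < m$ via the division algorithm, so that $q = \lfloor n/m \rfloor \ge 1$ (this uses $n \ge m$). I would then set $r$ of the parts equal to $q+1$ and the remaining $m-r$ parts equal to $q$; these are all positive integers summing to $rq+r + (m-r)q = qm + r = n$, as required. Since the largest part is at most $q+1 = \lfloor n/m \rfloor + 1$, it suffices to prove the single inequality $\lceil \log_2 \frac nm \rceil \ge \lceil \log_2 (\lfloor n/m \rfloor + 1) \rceil$, because $\lceil \log_2 \cdot \rceil$ is monotone and every $n_i \le q+1$.

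For that final inequality I would invoke Lemma~\ref{lem:log2:ceil}. Set $x = n/m \ge 1$. That lemma gives $\lceil \log_2 (n/m) \rceil = \lceil \log_2 \lceil n/m \rceil \rceil$, so it remains to compare $\lceil n/m \rceil$ with $\lfloor n/m \rfloor + 1$. When $m \nmid n$ these two quantities coincide (both equal $q+1$), and when $m \mid n$ we have $\lceil n/m \rceil = q$ while $\lfloor n/m \rfloor + 1 = q+1$, so the ceiling version is the smaller of the two and the inequality $\lceil \log_2 \lceil n/m \rceil \rceil \ge \lceil \log_2(q+1)\rceil$ could fail by the monotonicity argument alone. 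Thus I expect the main obstacle to be exactly the divisible case: when $m \mid n$ all parts equal $q$ and the largest part is $q$, not $q+1$, so I should bound against $q = \lceil n/m \rceil$ rather than $q+1$.

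I would therefore handle the two cases separately rather than bounding uniformly by $q+1$. In the non-divisible case the largest part is $q+1 = \lceil n/m \rceil$, and Lemma~\ref{lem:log2:ceil} gives $\lceil \log_2(q+1)\rceil = \lceil \log_2 \lceil n/m \rceil \rceil = \lceil \log_2 (n/m) \rceil$, so the bound holds with equality for the largest part and by monotonicity for all smaller parts. In the divisible case every part equals $q = n/m$, so $\lceil \log_2 n_i \rceil = \lceil \log_2 (n/m) \rceil$ for every $i$, and the inequality again holds with equality. In both cases the claimed bound $\lceil \log_2 \frac nm \rceil \ge \lceil \log_2 n_i \rceil$ holds for all $i$, completing the argument. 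The only real subtlety is resisting the temptation to bound every part crudely by $\lfloor n/m \rfloor + 1$; once the division into the two parity-of-divisibility cases is made, each reduces to a direct application of Lemma~\ref{lem:log2:ceil}.
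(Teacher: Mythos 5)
Your proof is correct and takes essentially the same route as the paper: the identical balanced partition into $r$ parts of size $q+1$ and $m-r$ parts of size $q$, with Lemma~\ref{lem:log2:ceil} supplying the key identity $\lceil \log_2 \lceil n/m\rceil\rceil = \lceil \log_2 (n/m)\rceil$. The only difference is cosmetic: the paper compresses your explicit divisible/non-divisible case split into the single bound $n_i \leq k + \lceil r/m\rceil$, which equals $\lceil n/m \rceil$ in both cases at once.
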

\begin{proof}
Let $n = km + r$ with $k,r\in \mathds{Z}$ and $0\leq r < m$ be the integer
division of~$n$ by~$m$.
Set $n_1=n_2=\cdots=n_{r}=k+1$ and $n_{r+1}=n_{r+2}=\dots=n_m = k$.

By Lemma~\ref{lem:log2:ceil}, we have 
\[
\lceil \log_2 n_i \rceil
\leq
\left\lceil \log_2 \left(k + \left\lceil \frac{r}{m} \right\rceil\right) \right\rceil
=
\left\lceil \log_2 \left(k + \frac{r}{m}\right) \right\rceil
=
\left\lceil \log_2 \frac{n}{m} \right\rceil
\]
for all $1\leq i\leq m$.
\end{proof}

We continue with the following generalization of a result by Charron-Bost and
  Schiper~\cite{CBS09}.
In particular ($m = 1$), it shows that any set of nodes can be "infected" by a
  single node, such that the set of infected nodes grows exponentially in size
  per round.

\begin{lem}\label{lem:log:bound}
Let $W\subseteq[n]$ be nonempty and $m$ be a positive integer.
If $t_2-t_1 \geq \log_2 \frac{\lvert W\rvert}{m}$, then 
there exists some $U\subseteq [n]$ with $\lvert U \rvert \leq m$
such that~$U$ at time~$t_1$ covers~$W$ at time~$t_2$.
\end{lem}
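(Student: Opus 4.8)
The plan is to first establish the special case $m=1$---where a single node suffices to cover $W$---and then to reduce the general statement to it via a balanced partition of $W$. Throughout I rely on the fact that self-loops are preserved under the graph product, so any set covers itself over an arbitrary time interval, and on transitivity of covering (Lemma~\ref{lem:cover:combination}) to compose single steps.

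For the single-node case I would argue by induction on $\lvert W\rvert$. If $\lvert W\rvert = 1$, say $W = \{w\}$, then $\{w\}$ at time $t_1$ covers $W$ at time $t_2$ through self-loops, and the hypothesis $t_2 - t_1 \ge \log_2 1 = 0$ imposes nothing. If $\lvert W\rvert \ge 2$, then $t_2 - t_1 \ge \log_2\lvert W\rvert \ge 1$, so $t_2 - 1 \ge t_1$ and the graph $G_{t_2-1}$ is available. Here I exploit that $G_{t_2-1}$ is nonsplit: grouping the nodes of $W$ into $\lceil \lvert W\rvert/2\rceil$ pairs (with a single leftover node handled by its self-loop when $\lvert W\rvert$ is odd), each pair has a common incoming neighbor in $G_{t_2-1}$. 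Collecting one such neighbor per group produces a set $W'$ with $\lvert W'\rvert \le \lceil \lvert W\rvert/2\rceil$ that at time $t_2-1$ covers $W$ at time $t_2$.

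To close the induction I need $(t_2-1) - t_1 \ge \log_2\lvert W'\rvert$. The naive estimate gives only $(t_2-1)-t_1 \ge \log_2(\lvert W\rvert/2)$, which is insufficient when $\lvert W\rvert$ is odd, since then $\lceil \lvert W\rvert/2\rceil > \lvert W\rvert/2$. This is exactly where Lemma~\ref{lem:log2:twosets} enters: applied to $\lceil \lvert W\rvert/2\rceil$ and $\lfloor \lvert W\rvert/2\rfloor$ it yields $\lceil \log_2\lvert W\rvert\rceil \ge \lceil \log_2\lceil \lvert W\rvert/2\rceil\rceil + 1 \ge \lceil \log_2\lvert W'\rvert\rceil + 1$. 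Since $t_2-t_1$ is an integer no smaller than $\log_2\lvert W\rvert$, hence no smaller than $\lceil\log_2\lvert W\rvert\rceil$, I get $(t_2-1)-t_1 \ge \lceil\log_2\lvert W'\rvert\rceil \ge \log_2\lvert W'\rvert$. As $\lvert W'\rvert < \lvert W\rvert$, the induction hypothesis produces a single node $u$ covering $W'$ at time $t_2-1$ from time $t_1$, and Lemma~\ref{lem:cover:combination} composes this with the step from $W'$ to $W$.

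For general $m$, if $\lvert W\rvert \le m$ I take $U = W$, which covers itself via self-loops. Otherwise $\lvert W\rvert > m$, so Lemma~\ref{lem:log2:partition} supplies positive integers $n_1,\dots,n_m$ summing to $\lvert W\rvert$ with $\lceil\log_2(\lvert W\rvert/m)\rceil \ge \lceil\log_2 n_i\rceil$ for every $i$. Partitioning $W$ into parts $W_1,\dots,W_m$ of these sizes, integrality of $t_2-t_1$ gives $t_2-t_1 \ge \lceil\log_2(\lvert W\rvert/m)\rceil \ge \lceil\log_2 n_i\rceil \ge \log_2\lvert W_i\rvert$, so the single-node case yields a node $u_i$ covering each $W_i$; then $U = \{u_1,\dots,u_m\}$ has at most $m$ elements and covers $W = \bigcup_i W_i$ at time $t_2$. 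I expect the main obstacle to be precisely the ceiling bookkeeping in the inductive step---reconciling exact halving of real magnitudes with the rounding forced by odd cardinalities---which is what the preparatory Lemmas~\ref{lem:log2:twosets} and~\ref{lem:log2:partition} are designed to absorb.
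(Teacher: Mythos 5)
Your proof is correct, but it organizes the halving argument dually to the paper's. The paper also reduces to $m=1$ via Lemma~\ref{lem:log2:partition} and then inducts on the interval length $t_2-t_1$, working top-down: it splits $W$ into two balanced halves $W_1,W_2$, invokes the induction hypothesis on the shorter interval starting at $t_1+1$ to obtain single nodes $j_1,j_2$ covering the halves, and applies nonsplitness exactly once, in the \emph{earliest} graph $G_{t_1}$, to merge $j_1$ and $j_2$ into one node; Lemma~\ref{lem:log2:twosets} plays the same ceiling-absorbing role there as in your argument. You instead induct on $\lvert W\rvert$ and work bottom-up: you apply nonsplitness $\lceil \lvert W\rvert/2\rceil$ times in the \emph{latest} graph $G_{t_2-1}$ to contract $W$ to a set $W'$ of roughly half the size one round earlier, then recurse on the shorter interval. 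Unrolled, the two recursions trace the same structure (about $\lvert W\rvert/2^s$ covering nodes $s$ rounds before $t_2$) and rest on the same three ingredients---transitivity of covering, Lemma~\ref{lem:log2:twosets}, and Lemma~\ref{lem:log2:partition}---so neither is more general, but each buys something: the paper's top-down split uses nonsplitness only once per inductive step, while your backward-contraction version is closer to the original Charron-Bost--Schiper infection-backwards argument and makes explicit two points the paper leaves implicit, namely the $\lvert W\rvert\leq m$ edge case in the reduction to $m=1$ (Lemma~\ref{lem:log2:partition} requires $n\geq m$) and the integrality step $t_2-t_1\geq\lceil\log_2\lvert W\rvert\rceil$. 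Your ceiling bookkeeping, applying Lemma~\ref{lem:log2:twosets} to $\lceil\lvert W\rvert/2\rceil$ and $\lfloor\lvert W\rvert/2\rfloor$, is sound, as is the appeal to self-loops being preserved under the graph product in your base case.
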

\begin{proof}
Using Lemma~\ref{lem:log2:partition}, we can assume without loss of generality
that $m=1$.
We proceed by induction on $t_2-t_1 \ge 0$.

\medskip

\noindent Base case:
If $t_2-t_1=0$, i.e., $t_1=t_2$, then $\lvert W\rvert = 1$ and the statement is
trivially true since we can choose $U=W$.

\medskip

\noindent Inductive step: Now let $t_2 - t_1 \geq 1$.
Let $W = W_1\cup W_2$ such that 
$\big\lvert \lvert W_1\rvert - \lvert W_2\rvert \big\rvert \leq 1$.
Using Lemma~\ref{lem:log2:twosets}, we see that 
$t_2-(t_1+1) \geq \lceil \log_2\lvert W_s\rvert\rceil$ for $s\in\{1,2\}$.
By the induction hypothesis, there hence exist nodes~$j_1$ and~$j_2$
that at time $t_1+1$ cover~$W_1$ and~$W_2$, respectively.
But now, using the nonsplit property of communication graph $G_{t_1}$, 
we see that there exists a node~$i$ that covers $\{j_1,j_2\}$ in~$G_{t_1}$.
An application of Lemma~\ref{lem:cover:combination} concludes the proof.
\end{proof}

Note that Lemma~\ref{lem:log:bound}, by choosing $W = [n]$ and $m = 1$,
  provides an upper bound on the dynamic radius of $O(\log n)$.
To show an upper bound of $O(\log\log n)$, we will apply this lemma
  only for the early infection phase of $O(\log\log n)$ rounds, and
  use a different technique, by the next two lemmas,
  for the late phase.

\begin{lem}\label{lem:inverse:image:size}
Let~$U$ and~$W$ be finite sets with $\lvert U\rvert =k$, $\vert W\rvert = n$,
and $f:\binom{U}{\lfloor \log n\rfloor} \to W$.
If $n\geq 8$, 
then there exists some $w\in W$ such that
$\left\lvert \bigcup f^{-1}[\{w\}] \right\rvert \geq k/e^4$.
\end{lem}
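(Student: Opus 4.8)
The plan is a double-counting argument over the $\lfloor\log n\rfloor$-subsets of $U$. Write $s=\lfloor\log n\rfloor$ and, for each $w\in W$, set $U_w=\bigcup f^{-1}[\{w\}]\subseteq U$; let $c=\max_{w\in W}\lvert U_w\rvert$ and let $w^\ast$ attain this maximum. Since $U_{w^\ast}$ is exactly the set whose cardinality we must bound from below, it suffices to prove $c\ge k/e^4$. First I would record the decisive containment: every $A\in f^{-1}[\{w\}]$ is an $s$-subset of $U$ contained in the union $U_w$, so $f^{-1}[\{w\}]\subseteq\binom{U_w}{s}$ and hence $\lvert f^{-1}[\{w\}]\rvert\le\binom{\lvert U_w\rvert}{s}\le\binom{c}{s}$. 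Because the fibers $f^{-1}[\{w\}]$ partition the domain $\binom{U}{s}$ as $w$ ranges over $W$, summing these bounds yields
\[
\binom{k}{s}=\sum_{w\in W}\lvert f^{-1}[\{w\}]\rvert\le n\binom{c}{s}\enspace.
\]

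Next I would turn this into a multiplicative bound on $k/c$. Assume $k\ge s$, so that the domain of $f$ is nonempty and the left-hand side above is positive; then $\binom{c}{s}>0$, forcing $c\ge s$. Writing the ratio as a telescoping product,
\[
\frac{\binom{k}{s}}{\binom{c}{s}}=\prod_{i=0}^{s-1}\frac{k-i}{c-i}\ge\Bigl(\frac{k}{c}\Bigr)^{s}\enspace,
\]
where each factor is at least $k/c$ because $k\ge c$ makes $i\mapsto(k-i)/(c-i)$ nondecreasing on $0\le i\le s-1<c$. Combining with the displayed inequality gives $(k/c)^{s}\le n$, that is, $k/c\le n^{1/s}$.

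Finally I would estimate $n^{1/s}$ for $n\ge 8$. In this range $s=\lfloor\log n\rfloor$ satisfies $s>\log n-1$, which is at least $\tfrac12\log n$, so the exponent $\tfrac{\log n}{s}$ is bounded by a small absolute constant and $n^{1/s}\le e^4$ comfortably; hence $c\ge k/e^4$, as required. I expect the telescoping-product step to be the only place a genuine inequality is used: the rest is bookkeeping about the partition and a routine estimate of $n^{1/\lfloor\log n\rfloor}$. The one subtlety to flag is the degenerate case $k<s$, where $\binom{U}{s}=\emptyset$ and the statement is only vacuously meaningful; in the intended regime $k$ is large, so this does not arise.
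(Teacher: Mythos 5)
Your proof is correct, and it shares the paper's overall skeleton---double counting over the fibers of $f$ together with the decisive containment $f^{-1}[\{w\}]\subseteq\binom{U_w}{s}$---but your key estimate is genuinely different and in fact sharper. The paper first pigeonholes to find a single fiber of size at least $\binom{k}{s}/n$, then bounds the two binomial coefficients separately via $\binom{k}{s}\geq (k/s)^s$ and $\binom{m}{s}\leq (me/s)^s$; the factor $e^s\leq e^{\log n}=n$ in the latter costs a second factor of $n$, which is why the paper only obtains $m\geq k/n^{2/s}$ and hence the constant $e^4$. Your telescoping identity
\[
\frac{\binom{k}{s}}{\binom{c}{s}}=\prod_{i=0}^{s-1}\frac{k-i}{c-i}\geq\Bigl(\frac{k}{c}\Bigr)^{s}
\]
avoids the $e^s$ loss entirely and yields $k/c\leq n^{1/s}\leq e^{2}$ for $n\geq 8$, i.e., the strictly stronger conclusion $c\geq k/e^{2}$; your choice of maximizing $\lvert U_w\rvert$ over $w$ rather than pigeonholing on fiber sizes first is an equivalent averaging step, and the monotonicity check for the factors ($k\geq c$, $c\geq s$ forced by $\binom{c}{s}>0$) is carried out correctly. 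One remark on the degenerate case you flag: for $1\leq k<s$ the lemma as stated is actually false (the domain $\binom{U}{s}$ is empty, so every union $\bigcup f^{-1}[\{w\}]$ is empty while $k/e^4>0$), and the paper's own proof tacitly assumes $k\geq s$ as well, since its bound $\binom{k}{s}\geq (k/s)^s$ fails otherwise. So your explicit assumption $k\geq s$ matches the paper's implicit one, and flagging it is a point in your favor rather than a gap.
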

\begin{proof}
By the pigeonhole principle, we get the existence of some $w\in W$ with
\begin{equation}\label{eq:lem:inverse:image:size:lower}
\begin{split}
\left\lvert f^{-1}[\{w\}] \right\rvert
\geq
\frac{\binom{k}{\lfloor \log n\rfloor}}{n}
\geq
\frac{k^{\lfloor \log n\rfloor}}{n \lfloor \log n\rfloor^{\lfloor \log n\rfloor}}
\enspace.
\end{split}
\end{equation}
Write $M = \bigcup f^{-1}[\{w\}]$ and $m = \lvert M \rvert$.
Since $S \in \binom{M}{\lfloor \log n \rfloor}$ for all $S\in f^{-1}[\{w\}]$, we
have
\begin{equation}\label{eq:lem:inverse:image:size:upper}
\begin{split}
\lvert f^{-1}[\{w\}]\rvert 
\leq 
\binom{m}{\lfloor \log n \rfloor}
\leq 
\frac{m^{\lfloor \log n \rfloor} e^{\lfloor \log n \rfloor}}{{\lfloor \log n
\rfloor}^{\lfloor \log n \rfloor}}
\leq
\frac{m^{\lfloor \log n \rfloor} n}{{\lfloor \log n
\rfloor}^{\lfloor \log n \rfloor}}
\enspace.
\end{split}
\end{equation}
Combining~\eqref{eq:lem:inverse:image:size:lower}
and~\eqref{eq:lem:inverse:image:size:upper},
we get
\begin{equation}\label{eq:lem:inverse:image:size:conc}
\begin{split}
m 
& \geq 
\frac{k}{n^{2/\lfloor \log n \rfloor}}
=
\frac{k}{e^{2\log n/\lfloor \log n \rfloor}}
\geq
\frac{k}{e^{2\log n/(\log n - 1)}}
\\
& =
\frac{k}{e^{2/(1 - 1/\log n)}}
\geq
\frac{k}{e^{2/(1 - 1/2)}}
=
\frac{k}{e^4}
\end{split}
\end{equation}
where we used $\log n \geq \log 8 \geq 2$.
This concludes the proof.
\end{proof}

The next lemma shows that during late infection, nodes are infected faster than
   than exponential as provided by Lemma~\ref{lem:log:bound} for the early phase:

\begin{lem}\label{lem:log:at:depth:loglog}
There exists some~$C>0$ such that
for all $t\geq 1$ there exists a set of at most $C\log n$ nodes
that at time~$t$ covers
the set~$[n]$ of all nodes at time $t + \lceil \log_2 \log n\rceil$.
\end{lem}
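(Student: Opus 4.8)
The plan is to cover $[n]$ at time $\tau := t + \lceil \log_2 \log n\rceil$ by repeatedly peeling off a constant fraction of the still-uncovered target nodes using a single covering node each time, so that only $O(\log n)$ nodes accumulate in total. Throughout I fix the target time $\tau$ and argue backward to time~$t$, and I assume $n \ge 8$; for small $n$ the set $U = [n]$ itself covers $[n]$ (self-loops are preserved under the graph product, so each node covers itself), and $n = O(1)$ is absorbed into the constant~$C$.

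The heart of the argument is a single-node step. Let $R \subseteq [n]$ be any set of target nodes with $|R| \ge \lfloor \log n\rfloor$. For every $S \in \binom{R}{\lfloor \log n\rfloor}$, Lemma~\ref{lem:log:bound} (applied with $m=1$ and $W = S$) produces a single node that at time~$t$ covers~$S$ at time~$\tau$, since $\tau - t = \lceil \log_2 \log n\rceil \ge \log_2 \lfloor \log n\rfloor = \log_2 |S|$; I let $f(S)$ be one such node, obtaining a map $f \colon \binom{R}{\lfloor \log n\rfloor} \to [n]$. Applying Lemma~\ref{lem:inverse:image:size} with $U = R$ and $W = [n]$ yields a node~$w$ whose preimage union $M := \bigcup f^{-1}[\{w\}]$ satisfies $|M| \ge |R|/e^4$. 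Since $f(S)=w$ means that $w$ covers every node of $S$, the node~$w$ at time~$t$ covers all of~$M$ at time~$\tau$; that is, a single node covers a $1/e^4$-fraction of~$R$.

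With this step in hand I iterate. Starting from $R_0 = [n]$, as long as $|R_i| \ge \lfloor \log n\rfloor$ I apply the step to obtain a node $w_{i+1}$ covering a subset of $R_i$ of size at least $|R_i|/e^4$, and let $R_{i+1}$ be the uncovered remainder, so that $|R_{i+1}| \le (1-e^{-4})\,|R_i|$. After $r = O(\log n)$ iterations---the bound following from $(1-e^{-4})^r n < \log n$ together with the constancy of $\log\frac{1}{1-e^{-4}}$---the remainder $R_r$ has fewer than $\lfloor \log n\rfloor$ nodes, and one further node covers it by a last application of Lemma~\ref{lem:log:bound}. Collecting $\{w_1,\dots,w_{r+1}\}$ gives a set of $O(\log n)$ nodes that at time~$t$ covers $[n]$ at time~$\tau$, which is the claim for a suitable~$C$.

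The routine parts are the geometric bookkeeping for the iteration count and the small-$n$ boundary cases. The crux I expect is the single-node step: recognizing that Lemma~\ref{lem:log:bound} makes the map~$f$ total (every $\lfloor \log n\rfloor$-subset of targets is covered by \emph{some} one node), and then reading Lemma~\ref{lem:inverse:image:size} as the assertion that one node must be the designated coverer of a large union. This is precisely what upgrades ``covers every small subset'' to ``covers a constant fraction,'' and hence what produces double-logarithmic depth rather than the logarithmic depth of Lemma~\ref{lem:log:bound} on its own.
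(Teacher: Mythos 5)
Your proposal is correct and follows essentially the same route as the paper's proof: both define a map $f$ sending each $\lfloor \log n\rfloor$-subset to a single covering node via Lemma~\ref{lem:log:bound}, extract via Lemma~\ref{lem:inverse:image:size} one node whose preimage union covers a $1/e^4$-fraction of the remaining targets, and iterate $O(\log n)$ times. If anything, you are slightly more careful than the paper, which glosses over the boundary cases (small $n$, and the final remainder of size below $\lfloor \log n\rfloor$, where Lemma~\ref{lem:inverse:image:size} no longer applies) that you handle explicitly with a last application of Lemma~\ref{lem:log:bound}.
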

\begin{proof}
For every set $A\in\binom{V}{\lfloor \log n\rfloor}$ of $\lfloor \log n \rfloor$
nodes, let $f(A) \in V$ be a node that at time~$t$ covers~$A$ at time~$t+\lceil \log_2 \log
n \rceil$, which exists by Lemma~\ref{lem:log:bound}.

We recursively define the following sequence of nodes~$v_i$, $\geq1$ and sets
of nodes~$V_i$, $i\geq 0$:
\begin{itemize}
\item $V_0 = V$
\item For $i\geq 1$, we choose~$v_i$ such that $\lvert \bigcup
f^{-1}[\{v_i\}]\rvert \geq \lvert V_{i-1}\rvert/e^4$, which exists by
Lemma~\ref{lem:inverse:image:size}, and $V_i = V_{i-1} \setminus
\bigcup f^{-1}[\{v_i\}]$.
\end{itemize}

Note that, setting $r = 1 + \log n / \log\frac{e^4}{e^4-1}$, we have $V_r =
\emptyset$.
Hence the set $\{v_1,\dots,v_r\}$ at time~$t$ covers all nodes at time 
$t + \lceil \log_2 \log n \rceil$.
Noting $r = O(\log n)$ concludes the proof.
\end{proof}

We are now ready to combine Lemma~\ref{lem:log:bound} for the early phase
  and Lemma~\ref{lem:log:at:depth:loglog} for the late phase to prove the main
  result of this section.

\begin{thm}\label{thm:nonsplit:loglog}
The dynamic radius of a network on~$n$ nodes
whose communication graphs are all nonsplit is $O(\log\log n)$.
\end{thm}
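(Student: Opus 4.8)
The plan is to combine the two phases of infection established by the preceding lemmas, working backwards in time from the full node set to a single broadcaster. Fix an arbitrary communication pattern $\mathcal{G}$ whose graphs are all nonsplit, and set $D = \lceil \log_2 \log n \rceil$ for convenience. The goal is to exhibit a single node that at time $1$ covers $[n]$ at some time $t^\star = O(\log\log n)$, which by definition of broadcast time yields $T(\mathcal{G}) = O(\log\log n)$, and hence bounds the network's dynamic radius since $\mathcal{G}$ was arbitrary.

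First I would handle the \emph{late phase} using Lemma~\ref{lem:log:at:depth:loglog}: applied at time $t=1$, it produces a set $S$ of at most $C\log n$ nodes that at time~$1$ covers all of $[n]$ at time $1 + D$. This is the step that beats the naive $O(\log n)$ bound — instead of shrinking the target set by a constant factor per round, a single application of length $D = \lceil \log_2 \log n\rceil$ collapses the whole of $[n]$ down to a set of size only logarithmic in $n$. The remaining task is to cover this small set $S$ by a single node, quickly.

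Next I would handle the \emph{early phase} with Lemma~\ref{lem:log:bound}. Since $\lvert S\rvert \leq C\log n$, choosing $m=1$ the lemma guarantees a single node $u$ that at time $1$ covers $S$ at time $1 + \lceil \log_2 \lvert S\rvert \rceil$. The key observation is the arithmetic:
\begin{equation*}
\lceil \log_2 \lvert S\rvert \rceil \leq \lceil \log_2 (C\log n)\rceil = O(\log\log n),
\end{equation*}
so this phase also costs only $O(\log\log n)$ rounds. One must be slightly careful that the two covering statements are stacked in the correct temporal order — the single node covers $S$ \emph{earlier} in time, and $S$ in turn covers $[n]$ later — so that transitivity (Lemma~\ref{lem:cover:combination}) applies in the right direction; this is exactly the ``go back in time'' discipline emphasized in the section preamble.

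Finally, applying Lemma~\ref{lem:cover:combination} with the single node $u$ at time~$1$ covering $S$ at time $1 + \lceil\log_2\lvert S\rvert\rceil$, and $S$ at that time covering $[n]$ at time $1 + \lceil\log_2\lvert S\rvert\rceil + D$, we obtain that $u$ at time~$1$ covers $[n]$ at the final time, i.e.\ $u$ is a broadcaster in the product of the first $O(\log\log n)$ graphs. (A minor technical point: to stack the phases one should invoke Lemma~\ref{lem:log:at:depth:loglog} at time $t = 1 + \lceil\log_2\lvert S\rvert\rceil$ rather than at $t=1$, so that the late phase begins exactly where the early phase ends.) I expect the main obstacle to be purely bookkeeping: managing the small-$n$ edge cases where $n < 8$ (so that Lemma~\ref{lem:inverse:image:size} does not apply) and where the ceilings degenerate, which can be absorbed into the $O(\cdot)$ constant by treating bounded $n$ separately. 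No genuinely hard step remains once both lemmas are in hand; the novelty lives entirely in Lemma~\ref{lem:log:at:depth:loglog}.
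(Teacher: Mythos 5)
Your proposal is correct and follows essentially the same route as the paper's proof: Lemma~\ref{lem:log:at:depth:loglog} for the late phase producing a set of size at most $C\log n$, Lemma~\ref{lem:log:bound} with $m=1$ for the early phase covering that set by a single node, glued together by the transitivity Lemma~\ref{lem:cover:combination}. The only residual wrinkle in your parenthetical fix --- that the offset $1+\lceil\log_2\lvert S\rvert\rceil$ refers to a set $S$ whose identity depends on the time at which the late phase is invoked --- dissolves because the size bound $\lvert S\rvert\leq C\log n$ is uniform in $t$, so one simply fixes the starting time $t=\lceil\log_2(C\log n)\rceil$ in advance, which is exactly what the paper does.
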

\begin{proof}
Let $t = \lceil\log_2  (C \log n)\rceil$ where~$C$ is the constant from 
Lemma~\ref{lem:log:at:depth:loglog}.

By Lemma~\ref{lem:log:at:depth:loglog}, there is a set~$A$ of nodes with
$\lvert A\rvert\leq C\log n$ that at time~$t$ covers all nodes at
time $t +\lceil \log_2 \log n \rceil$.
By Lemma~\ref{lem:log:bound}, a single node at time~$1$ covers~$A$ at
time~$t$.

Combining both results via Lemma~\ref{lem:cover:combination} shows that a
single node at time~$1$ covers all nodes at
time 
$\lceil\log_2  (C \log n)\rceil + \lceil \log_2 \log n \rceil = 
O(\log\log n)$.
\end{proof}

\section{Nonsplit Networks from Asynchronous Rounds}
\label{sec:async}

We now show that in an important special case of nonsplit networks, namely
  those evolving from distributed algorithms that establish a round structure over
  asynchronous message passing in the presence of crashes, the dynamic radius is
  at most $2$.

In the classic asynchronous message passing model with crashes, it is assumed
that all messages sent have an unbounded but finite delay until they are
delivered.
Furthermore, processes do not operate in lock-step but may perform their
computations at arbitrary times relative to each other.
In addition, some processes may be faulty in the sense that they are prone to
crashes, i.e., they may seize to perform computations at an arbitrary point in
time.

This means that in a system where up to $f$ processes may be faulty, in order
to progress in a distributed algorithm, a process may wait until it received a
message from $n-f$ different processes but no more: If a process waits for
a message from $> n-f$ different processes,
but there were in fact $f$ crashes, this process will wait forever.
For this reason, algorithms for this asynchronous model often employ the concept
of an \emph{asynchronous round}, sometimes realized as a local round counter variable
$r_i$, which is held by each process $i \in [n]$ and appended to every message.
A process $i$ increments $r_i$ only if it received a message
containing a round counter $\ge r_i$ from $n-f$ different processes.

One may now ask how fast information tokens can spread in such a distributed system,
  again, if processes repeatedly receive and forward information tokens.

For that purpose we consider the network whose communication patterns are induced by $n$ processes
  communicating in asynchronous rounds.
When deriving the communication graph of such an asynchronous round, we get a digraph where
  each process has at least $n-f$ incoming neighbors.
In this sense, the round $t$ communication graph $G_t$ represents, for each process $i$, a set of $n-f$ processes that managed to send a message to $i$, containing a round counter $\ge t$, before they crashed.
It is important to note that an edge $(i, j)$ in $G_t$ represents that $j$ received a message from $i$ that contained a round
counter $r_i \ge t$, but not necessarily $r_i=t$.

Below, we study the case where $n > 2f$, i.e., a majority of the processes is
correct.
This implies that the sets of incoming neighbors of any two processes in a
communication graph have a non-empty intersection, which means that the
communication graph is nonsplit.
In fact, if $n \leq 2f$, then the network can be disconnected into two disjoint sets of processes that do not
receive messages from the other until termination of the algorithm.
Below, we establish a constant upper bound on the dynamic radius of this
important class of nonsplit graphs.

\begin{thm}
Let $f \ge 0$, $n > 2f$, and $(G_r)_{r \ge 1}$ be a sequence of communication graphs
  with $\In_i(G_r) \ge n-f$ for all $r$ and all $i$.
The dynamic radius of $(G_r)_{r \ge 1}$ is at most $2$.
\end{thm}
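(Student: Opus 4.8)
The plan is to show that in a system where every node has at least $n-f$ incoming neighbors and $n > 2f$, any single round already creates enough overlap to let information spread to all nodes in just two rounds. The crucial structural fact is that because each $\In_i(G_r)$ has size at least $n-f$ and $n-f > n/2$, any two in-neighbor sets $\In_i(G_r)$ and $\In_j(G_r)$ intersect: their sizes sum to more than $n$, so they cannot be disjoint subsets of $[n]$. This is exactly the nonsplit property, but here I would extract a stronger quantitative consequence than mere nonsplitness.

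First I would fix a round, say round $1$, and argue about the in-neighbor sets in $G_1$. Since $|\In_i(G_1)| \geq n - f$ for every $i$, and $n - f > f$ (equivalently $n > 2f$), the complement of $\In_i(G_1)$ has size at most $f < n - f$. The key observation is then a \emph{common element} statement: I would like to find a single node $k^\ast$ that lies in $\In_i(G_1)$ for \emph{every} $i$, i.e., a node that reaches all others in one round, which would make $k^\ast$ a broadcaster in $G_1$ and give dynamic radius $1$. However, this is generally too strong — a common in-neighbor of every pair need not be a common in-neighbor of all nodes simultaneously. So instead the two-round structure is genuinely needed, and I would aim for the following: there exists a node that is a broadcaster in the product $G_1 \circ G_2$.

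The main step is therefore to exhibit a node $u$ such that $u$ at time $1$ covers $[n]$ at time $3$, i.e., $u$ has an edge to every node in $G_1 \circ G_2$. Recall that $(u,j) \in E_{G_1 \circ G_2}$ iff there is some intermediate $k$ with $(u,k) \in E_{G_1}$ and $(k,j) \in E_{G_2}$. For a target node $j$, its in-neighbor set $\In_j(G_2)$ has size at least $n - f > n/2$. I would pick $u$ to be a node whose set of out-neighbors in $G_1$, namely $\Out_u(G_1)$, is large enough to intersect \emph{every} such majority set $\In_j(G_2)$. The cleanest route is to note that I need $\Out_u(G_1) \cap \In_j(G_2) \neq \emptyset$ for all $j$, and since each $\In_j(G_2)$ is a majority set, it suffices that $\Out_u(G_1)$ is itself a majority set (size $> n/2$), because two majority subsets of $[n]$ always meet. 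So the reduced goal is: find a single node $u$ whose out-degree in $G_1$ exceeds $n/2$.

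The hard part will be establishing the existence of such a high-out-degree node $u$ in $G_1$ from the in-degree hypothesis alone. The total number of edges in $G_1$ is at least $n(n-f)$, so the average out-degree is at least $n - f > n/2$; by averaging, some node $u$ has $|\Out_u(G_1)| \geq n - f > n/2$. With that $u$ fixed, for any target $j$ the set $\In_j(G_2)$ also has size $> n/2$, so $\Out_u(G_1) \cap \In_j(G_2) \neq \emptyset$, yielding the required intermediate node $k$ and hence the edge $(u,j)$ in $G_1 \circ G_2$. Thus $u$ is a broadcaster in $G_1 \circ G_2$, so $T_u(\mathcal{G}) \leq 2$ and the dynamic radius is at most $2$. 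I would close by remarking that the bound $2$ is tight in general (a single round cannot suffice when no universal common in-neighbor exists), though only the upper bound is claimed here.
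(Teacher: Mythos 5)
Your proof is correct and takes essentially the same route as the paper: both double-count the edges of $G_1$ (via $\sum_i \lvert\In_i(G_1)\rvert \ge n(n-f)$) to obtain a node of large out-degree in $G_1$, and then use a pigeonhole intersection of that out-neighborhood with each $\In_j(G_2)$ to conclude the node is a broadcaster in $G_1 \circ G_2$. The only cosmetic difference is that you extract the bound $\lvert\Out_u(G_1)\rvert \ge n-f$ by direct averaging, while the paper establishes the weaker but sufficient $\lvert\Out_m(G_1)\rvert \ge f+1$ by the same counting phrased as a contradiction.
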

\begin{proof}
To show the bound on the radius we prove that there exists a node $m$, that will be
  the center that realizes the dynamic radius, i.e.,
\begin{align}
  \exists m \in [n]\ \forall i \in [n]\ \exists j \in [n]: \, m \in \In_j(G_1) \wedge j \in \In_i(G_2)\,.\label{eq:s1}
\end{align}

Equation \eqref{eq:s1} now follows from
\begin{align}
  \exists m \in [n] :\, \lvert \Out_m(G_1) \rvert \ge f+1\,,\label{eq:s2}
\end{align}
by the following arguments: Equation~\eqref{eq:s2} states that the information at
  $m$ has been transmitted to at least $f+1$ nodes.
By assumption $\In_i(G_2) \ge n-f$ for all $i \in [n]$.
Thus each $i$ must have an incoming
  neighbor $j$ in digraph $G_2$ such that $j \in \Out_m(G_1)$;
  equation~\eqref{eq:s1} follows.

It remains to show \eqref{eq:s2}.
Suppose that the equation does not hold, i.e.,
\begin{align}  
  \forall j \in [n] : \, |\Out_j(G_1)| \le f\,.\label{eq:tocontradict}
\end{align}
By assumption on digraph $G_1$, we have
\begin{align}
\sum_{i \in [n]}|\In_i(G_1)| \ge n(n-f)\,\label{eq:s10}
\end{align}
Denoting by $\xi$ the function that is $1$ if its argument is true, and $0$ otherwise, we may rewrite
\begin{align}
\sum_{i \in [n]}|\In_i(G_1)| & = \sum_{i \in [n]} \sum_{j \in [n]} \xi( j \in \In_i(G_1) ) \notag \\
                             & = \sum_{j \in [n]} \sum_{i \in [n]} \xi( j \in \In_i(G_1) ) \notag \\
                             & = \sum_{j \in [n]} \lvert \{ i \in [n] \mid j \in \In_i(G_1) \}\rvert \notag \\
                             & = \sum_{j \in [n]} \lvert \Out_j(G_1) \rvert \notag
\intertext{and, using \eqref{eq:tocontradict},}
                             & \le nf\,.\notag
\end{align}
Together with \eqref{eq:s10}, we have $n(n-f) \le nf$; a contradiction
  to the assumption that $n > 2f$.
\end{proof}

\section{A Lower Bound for Consensus in Dynamic Networks}

Subsequently, we show that the dynamic radius of a network presents a lower bound on the time
  complexity of a consensus algorithm for this network.

  Let $[n] = \left\{1, \ldots, n \right\}$ be a set of processes that
  operate in lock-step synchronous rounds $r = 1, 2, \ldots$ delimited by times
$t = 0, 1, \ldots$ where, by convention, round $r$ happens between time $r-1$
and time $r$.
Each round consists of a phase of communication, followed by a phase of local
computation.
Like in the previous sections, a communication pattern defines, for each round, which messages reach their
destination.

In the (exact) consensus problem, every node $i \in [n]$ starts with an input value $x_i \in V$
  from an arbitrary domain $V$ and holds a unique write-once variable $y_i$, initialized to
  $y_i = \bot$, where $\bot$ denotes a special symbol s.t.\ $\bot \notin V$.
Since we are concerned with an impossibility result here, we may restrict
  ourselves without loss of generality to the binary consensus problem, i.e.,
  the case where $V = \left\{ 0, 1 \right\}$.
An execution of a deterministic consensus algorithm is a sequence of state
  transitions according to the algorithm and determined by the input assignment
  and the communication pattern.
An algorithm solves consensus if it satisfies in all of its executions:
\begin{itemize}
  \item[](Termination) Eventually for every $i \in [n]$, $y_i \ne \bot$.
  \item[](Validity) If $y_i \ne \bot$ then $y_i = x_j$ for some $j \in [n]$.
  \item[](Agreement) For every $i, j \in [n]$, if $y_i \ne \bot$ and $y_j \ne
    \bot$ then $y_i = y_j$.
\end{itemize}

\begin{thm}\label{thm:consensus}
  If the dynamic radius of the network is $k$, then, in every
  deterministic consensus algorithm, some process has not terminated before
  time $k$.
\end{thm}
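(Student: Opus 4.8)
The plan is to argue by contradiction using an indistinguishability (scenario-splitting) argument, the standard tool for consensus lower bounds. Suppose some deterministic algorithm solves consensus on the network, and that the dynamic radius is $k$, yet every process terminates strictly before time~$k$. Because the dynamic radius is $k$, there is a communication pattern $\mathcal{G} = (G_1, G_2, \dots)$ witnessing this, meaning that no single node is a broadcaster in the product of the first $k-1$ graphs; equivalently, for \emph{every} candidate source node $m$, there is at least one node $i_m$ that has \emph{not} received $m$'s information token within the first $k-1$ rounds. The point is that if all processes decide before time~$k$, then each process' decision depends only on information it received during rounds $1,\dots,k-1$, so a process cannot have its decision influenced by the input of any node whose token failed to reach it in that window.

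First I would make precise the notion that a process' state at time $t$ depends only on the inputs of nodes that cover it. Concretely, fix a communication pattern witnessing dynamic radius $k$ and consider the product graph $P = G_1 \circ \cdots \circ G_{k-1}$. For each node $i$, let $S_i = \In_i(P)$ be the set of nodes whose token has reached $i$ by time $k-1$; a routine induction on rounds shows that the state of $i$ at time $k-1$, hence its decision $y_i$ (taken before time~$k$), is a function only of the inputs $\{x_j : j \in S_i\}$ and of the (fixed) communication pattern. The key structural consequence of the dynamic radius being exactly $k$ is that $\bigcap_i S_i = \emptyset$: no node is an in-neighbor of every node in $P$, for otherwise that node would be a broadcaster and the broadcast time would be at most $k-1$.

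Next I would derive the contradiction from Validity and Agreement. Start from the all-zero input assignment; by Validity every $y_i = 0$. I then flip inputs from $0$ to $1$ one node at a time. When I flip the input of node $m$ to $1$, only those processes $i$ with $m \in S_i$ can possibly change their decision, because any $i$ with $m \notin S_i$ has an identical view in both assignments and so, by determinism, decides the same value before time~$k$. Since $\bigcap_i S_i = \emptyset$, for \emph{each} $m$ there is some process $i_m$ that does not see $m$, so $i_m$ still decides $0$ after that flip. The idea is to use this to walk from the all-zero assignment (all decide $0$) toward the all-one assignment (all decide $1$ by Validity) and locate a single-node flip across which Agreement must break: at some step the collection of decisions cannot be uniformly $0$ (it must eventually become $1$ somewhere) yet some process is pinned to $0$, so two processes disagree on a legal input assignment, contradicting Agreement.

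The main obstacle is making the ``walk'' argument fully rigorous, since a single flip of $m$ can simultaneously change the decisions of all processes seeing $m$, so the two endpoints of a single flip need not differ in just one process' output. The clean way around this is to identify, along the sequence of assignments from all-zeros to all-ones, the first assignment at which some process decides $1$; the flip producing that assignment changes exactly one input, say at node $m$, and the process switching to $1$ must be one that sees $m$. Simultaneously, the pinned process $i_m$ that does not see $m$ decides the same value on both sides of this flip, and by choosing the flip order so that $i_m$ still decides $0$ at that point, I obtain a single admissible assignment on which two processes output $0$ and $1$ respectively, violating Agreement. Care is needed to ensure such a witness exists for whichever $m$ is flipped; I expect this to follow by choosing the input-flip ordering adaptively, or by a short counting argument over the $2^n$ assignments using that every $m$ is missed by some process.
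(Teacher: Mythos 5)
Your proposal is correct and is essentially the paper's own proof: a chain of single-input flips from the all-zero to the all-one assignment, combined with the observation that, since no node is a broadcaster in $G_1\circ\cdots\circ G_{k-1}$, every flipped node $m$ is missed by some process $i_m$ whose execution is indistinguishable across the flip. The one place you hedge --- ``choosing the input-flip ordering adaptively, or by a short counting argument'' --- is a non-issue, and your third paragraph already contains the complete argument: since decisions lie in $\{0,1\}$ and, under your contradiction hypothesis, every process has decided by time $k-1$ in every execution on this pattern, at the predecessor $C$ of the \emph{first} assignment in the chain at which some process decides $1$, all processes decide $0$ (by minimality of that assignment); hence $i_m$ decides $0$ in $C$, carries that decision to $C'$ by indistinguishability, and Agreement is violated at $C'$ because some process there decides $1$. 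No adaptive ordering or counting over the $2^n$ assignments is needed. The paper phrases the same step dually: Agreement forces a uniform decision at every intermediate assignment, so along the chain the uniform value switches from $0$ to $1$ across a single flip, and the indistinguishable process $i'$ then yields the contradiction directly.
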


\begin{proof}
  Let $\mathcal{G}$ be a communication pattern with dynamic radius $k$,
  which occurs in the network by assumption.
  Suppose, in some deterministic consensus algorithm $A$, all $i \in [n]$ have
  terminated at time $k-1$ in every execution based on $\mathcal{G}$.
  Let $C_0$ be the input assignment where $x_i = 0$ for all $i \in [n]$ and
  $C_1$ be the input assignment where $x_i = 1$ for all $i \in [n]$.
  By validity, when running $A$ under $\mathcal{G}$ and starting from $C_0$, all $i \in
  [n]$ have $y_i = 0$ by time $k-1$ and when starting from $C_1$, they have
  $y_i = 1$.
  Thus, there are input assignments $C, C'$ that differ only in the input
  assignment $x_j$ of a single process $j$ and, for all $i \in [n]$, at time
  $k-1$, $y_i = 0$ when applying $A$ under $\mathcal{G}$ when starting from $C$ and
  $y_i = 1$ when starting from $C'$.
  Since there is no broadcaster in $\mathcal{G}$ before round $k$, there is
  some process $i'$ that did not receive a (transitive) message from $j$ and
  thus $i'$ is in the same state in both executions.
  Therefore, $i'$ decides on the same value in both executions, which
  is a contradiction and concludes the proof.
\end{proof}

\section{Conclusion}

In this paper, we found that nonsplit networks are a convenient abstraction
that arises naturally when considering information dissemination in a variety
of dynamic network settings.
Since classic information dissemination problems are trivially impossible in these
nonsplit dynamic networks, it made sense to study the more relaxed dynamic
radius here.
As we showed in Theorem~\ref{thm:consensus}, this is an important
characteristic with respect to the impossibility of exact consensus.
For our main technical contribution, we proved a new upper bound in
Theorem~\ref{thm:nonsplit:loglog}, which shows that the dynamic radius of
nonsplit networks is in $O(\log \log n)$.
This is an exponential improvement of the previously known upper bound of
$O( \log n)$.

In Section~\ref{sec:async}, we showed an upper bound of $2$ asynchronous
rounds for the dynamic radius in the asynchronous message passing model with
crash failures.
Thus, in this important class of nonsplit networks, information dissemination
is considerably faster than what is currently known for the general case.

Combining our Theorem~\ref{thm:nonsplit:loglog} with the result from
\cite{CFN15:ICALP} that established a $O(n)$ simulation of nonsplit networks in
rooted networks, i.e., networks where every communication graph contains a
rooted spanning tree, yields an improvement of the previously known upper bound
for the dynamic radius of rooted dynamic networks from $O(n \log n)$ to
$O(n \log \log n)$:

\begin{thm}
  The dynamic radius of a dynamic networks whose communication graphs are
  rooted is $O(n \log \log n)$.
\end{thm}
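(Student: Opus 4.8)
The plan is to reduce the rooted case to the already-resolved nonsplit case by exploiting the block structure provided by \cite{CFN15:ICALP}. Fix an arbitrary communication pattern $\mathcal{G} = (G_1, G_2, \dots)$ occurring in the rooted network; since the network's dynamic radius is the supremum over its patterns, it suffices to bound $T(\mathcal{G})$ by $O(n \log\log n)$ uniformly in $\mathcal{G}$. First I would partition the round indices into consecutive blocks of length $n-1$ and define, for each $k \geq 1$, the block product
\[
H_k = G_{(k-1)(n-1)+1} \circ G_{(k-1)(n-1)+2} \circ \cdots \circ G_{k(n-1)} \,.
\]
By the result of \cite{CFN15:ICALP} that the product of any $n-1$ rooted communication graphs is nonsplit, each $H_k$ is nonsplit, so $\mathcal{H} = (H_1, H_2, \dots)$ is a nonsplit communication pattern on $n$ nodes.

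Next I would apply Theorem~\ref{thm:nonsplit:loglog} to $\mathcal{H}$. Since $\mathcal{H}$ is a nonsplit pattern, its dynamic radius is bounded by that of the nonsplit network, so $T(\mathcal{H}) = s$ for some $s = O(\log\log n)$. Unwinding the definition of broadcast time, there is a node $i \in [n]$ that is a broadcaster in $H_1 \circ H_2 \circ \cdots \circ H_s$. The crucial bookkeeping step is that, by associativity of the graph product,
\[
H_1 \circ H_2 \circ \cdots \circ H_s = G_1 \circ G_2 \circ \cdots \circ G_{s(n-1)} \,,
\]
so $i$ is in fact a broadcaster in the product of the first $s(n-1)$ graphs of $\mathcal{G}$. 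Hence $T_i(\mathcal{G}) \leq s(n-1)$ and therefore $T(\mathcal{G}) \leq s(n-1) = O(\log\log n)\cdot(n-1) = O(n\log\log n)$. Because $\mathcal{G}$ was an arbitrary element of the rooted network, taking the supremum over all such patterns yields the claimed bound.

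Since the argument is essentially a clean composition of two prior results, there is no deep obstacle; the only points requiring care are the conversion between the block count of $\mathcal{H}$ and the round count of $\mathcal{G}$ (the factor $n-1$), and the appeal to associativity that makes prefix products of $\mathcal{H}$ coincide exactly with prefix products of $\mathcal{G}$. I would also note that trivial degenerate small-$n$ cases (e.g.\ $n=1$, where $n-1=0$ and the empty product already makes every node a broadcaster) fall outside the asymptotic claim and need no separate treatment.
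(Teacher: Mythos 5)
Your proposal is correct and follows exactly the argument the paper sketches in its conclusion: block the rounds into segments of $n-1$ graphs, invoke the result of \cite{CFN15:ICALP} that each block product is nonsplit, apply Theorem~\ref{thm:nonsplit:loglog} to the resulting pattern, and use associativity of the graph product to convert the $O(\log\log n)$ block count back into $O(n\log\log n)$ rounds. Your explicit bookkeeping of the prefix-product identity and the degenerate $n=1$ case is careful spelling-out of details the paper leaves implicit, but the route is the same.
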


While this is another hint at the usefulness of the nonsplit abstraction for
dynamic networks, the tightness of this bound remains an open question.

%
\bibliographystyle{plain}
\bibliography{references}

\end{document}